\definecolor{structure}{rgb}{0.23,0.4,0.7}
\newtheorem{lemma}{Lemma}
\newtheorem{theorem}{Theorem}
\newtheorem{definition}{Definition}
\newsavebox{\blocksavebox}
\definecolor{niceblue}{rgb}{0.33,0.5,0.8}
\newcommand{\rr}{\mathbb{R}}
\newcommand{\cc}{\mathbb{C}}
\newcommand{\refsub}[2]{\hyperref[#1]{\ref*{#1}#2}}
\newcommand{\coloneqq}{\mathrel{\vcentcolon\mkern-1.2mu=}} 
\newcommand{\norm}[2][]{
  \ifthenelse{\equal{#1}{}}
    {\left\| {#2} \right\|}
    {\ifthenelse{\equal{#1}{uinv}}
      {\left\vert\kern-0.25ex\left\vert\kern-0.25ex\left\vert {#2} \right\vert\kern-0.25ex\right\vert\kern-0.25ex\right\vert}
      {\left\| {#2} \right\|_{#1}}
    }
}
\newcommand{\taverage}[2][]{
  \ifthenelse{\equal{#1}{}}
  {\overline{#2}}
  {\overline{#2}^{#1}}
}
\newcommand{\tracedistance}[3][]{
  \ifthenelse{\equal{#2}{}}
  {\ifthenelse{\equal{#3}{}}
    {\mathcal{D}_{#1}}{}
  }{
    \ifthenelse{\equal{#1}{}}
    {\mathchoice{\operatorname{\mathcal{D}}\left(#2,#3\right)}{\operatorname{\mathcal{D}}(#2,#3)}{\operatorname{\mathcal{D}}(#2,#3)}{\operatorname{\mathcal{D}}(#2,#3)}}
    {\mathchoice{\operatorname{\mathcal{D}}_{#1}\left(#2,#3\right)}{\operatorname{\mathcal{D}}_{#1}(#2,#3)}{\operatorname{\mathcal{D}}_{#1}(#2,#3)}{\operatorname{\mathcal{D}}_{#1}(#2,#3)}}
  }
}
\newcommand{\fidelity}[3][]{
  \ifthenelse{\equal{#2}{}}
  {\ifthenelse{\equal{#3}{}}
    {\mathcal{F}_{#1}}{}
  }{
    \ifthenelse{\equal{#1}{}}
    {\mathchoice{\operatorname{\mathcal{F}}\left(#2,#3\right)}{\operatorname{\mathcal{F}}(#2,#3)}{\operatorname{\mathcal{F}}(#2,#3)}{\operatorname{\mathcal{F}}(#2,#3)}}
    {\mathchoice{\operatorname{\mathcal{F}}_{#1}\left(#2,#3\right)}{\operatorname{\mathcal{F}}_{#1}(#2,#3)}{\operatorname{\mathcal{F}}_{#1}(#2,#3)}{\operatorname{\mathcal{F}}_{#1}(#2,#3)}}
  }
}
\newcommand{\Sr}[3][]{
  \ifthenelse{\equal{#1}{}}
    {\operatorname{\mathnormal{S}}(#2\|#3)}
    {\operatorname{\mathnormal{S}}_{#1}(#2\|#3)}
}
\DeclareMathOperator{\1}{\mathbb{I}}
\newcommand{\tr}{{\rm tr}}
\definecolor{jens}{rgb}{0.1,0.5,0.1}
\definecolor{martin}{rgb}{0,0,1.0}
\newcommand{\beq}[0]{\begin{equation}}
\newcommand{\eeq}[0]{\end{equation}}
\newcommand{\hide}[1]{}
\begin{document}

\title{A compellingly simple proof of the speed of sound for interacting bosons}

\author{J.\ Eisert}
\address{Dahlem Center for Complex Quantum Systems, Freie Universit{\"a}t Berlin, 14195 Berlin, Germany}
\address{Helmholtz-Zentrum Berlin f{\"u}r Materialien und Energie, 14109 Berlin, Germany}

\begin{abstract}
On physical grounds, one expects locally interacting quantum many-body systems to feature a finite group velocity. This intuition is rigorously underpinned by Lieb-Robinson bounds that state that locally interacting Hamiltonians with finite-dimensional constituents on suitably regular lattices always exhibit such a finite group velocity. This also implies that causality is always respected by the dynamics of quantum lattice models. It had been a long-standing open question whether interacting bosonic systems also feature finite speeds of sound in information and particle propagation, which was only recently resolved. This work proves a strikingly simple such bound for particle propagation---shown in literally a few elementary, yet not straightforward, lines---for generalized Bose-Hubbard models defined on general lattices, proving that appropriately locally perturbed stationary states feature a finite speed of sound in particle numbers.
\end{abstract}

\maketitle

The speed of sound determines how quickly information can travel through a medium like air. Similarly, 
quantum lattice models in condensed matter physics are expected to exhibit a speed of sound, with excitations propagating ballistically through the system. This is often taken as a given, and the idea that quantum lattice models with finite-range interactions should feature a finite speed of sound is encapsulated in the famous Lieb-Robinson bounds \cite{liebrobinson,Hastings-CMP-2006,Nachtergaele-CMP-2006,SpeedLimits}. For spin models,
these bounds are typically expressed in terms of operator norms of commutators, where one observable is evolved in time within the Heisenberg picture. While the physical interpretation of these expressions may not be immediately clear, various readings of finite information propagation and transport in non-equilibrium scenarios can be derived from them \cite{1408.5148, PolkovnikovReview,christian_review}. Lieb-Robinson bounds, in this context, establish the maximum speed at which information or correlations can propagate through a lattice. In any local lattice model with finite-dimensional constituents, information is largely confined to a sound cone, with propagation outside this cone exponentially suppressed.


Indeed, there is a wealth of further physically meaningful properties that can be derived from them. 
For instance, gapped phases of matter always exhibit exponentially decaying correlations \cite{Hastings-CMP-2006, Nachtergaele-CMP-2006}, which can be proven using these bounds. They help approximate ground states of gapped quantum spin systems \cite{0904.4642}, provide insights into the stability of quantum phases \cite{1109.1588, NachtergaelePhases}, and support the area law for entanglement entropy \cite{Marien2016, AreaReview}, a key feature of quantum many-body systems. Additionally, Lieb-Robinson bounds are used to demonstrate the stability of topological order under local perturbations \cite{1001.0344}. Notably, they play a central role in proving the quantization of Hall conductance for interacting electrons on a torus \cite{HallConductance}.

Even though originally devised for local Hamiltonian systems with finite-dimensional
constituents, they have been generalized to be applicable for long-ranged interactions
 \cite{PhysRevLett.113.030602}, a situation in  which still a sound cone can be identified if the interactions are sufficiently rapidly
 decaying \cite{LongRangedCone} (even though this will break down if the interactions become too long-ranged 
 \cite{1309.2308}). Pushing this mindset even further, analogues for open quantum systems
 have been formulated  \cite{1111.4210,LRreviewchapter}, including ones on long-ranged 
 open systems \cite{OpenLR}. Overall, it seems fair to say that Lieb-Robinson bounds have
 been established as one of the pillars on which the mathematical physics of quantum lattice
 models rests, with substantial and important  implications for condensed matter physics.

A long-standing question had remained unresolved throughout the developments mentioned above over many years, 
despite considerable efforts, and has only recently been settled: This is the question of whether a finite speed of sound could also be
identified for a ubiquitous family of physical systems: These are \emph{bosonic lattice systems}. The core issue lies in the fact that, in these systems, the local dimension is no longer finite, making the previously applicable proofs inapplicable. Finding strategies to address this puzzle is more complex than it may initially seem, as the assumption of finite local dimension is far from a mere technicality. It has become clear that for 
non-interacting bosonic systems, a finite group velocity can still be identified \cite{0803.0890,Anharmonic}, leading to strong bounds for equilibration 
\cite{AnalyticalQuench,GluzaEisertFarrelly} in non-equilibrium situations \cite{1408.5148,PolkovnikovReview}. Early studies on interacting systems have used elegant proof techniques, but they have focused on physically rather implausible systems \cite{Anharmonic}. It is even possible to construct contrived local interacting models that violate causality bounds  \cite{EisertGross09}, making the question of sound speeds in interacting models even more intriguing.


What one is commonly rather interested in are interacting bosonic particle Hamiltonians,
most prominently the Bose-Hubbard model. The painful lack of insights into the situation of Bose-Hubbard
like systems over the years has been aggravated by the fact that already a number of experiments have been
performed with ultra-cold atoms that exhibit a ballistic particle propagation in Bose-Hubbard 
systems \cite{1111.0776,Expansion}. Only very recently,  this long-standing question has been resolved affirmatively: There is a finite speed 
of particle and information propagation for interacting particle Hamiltonians, following a spectacular surge of 
academic activity
 \cite{Saito,PhysRevX.12.021039,BosonicLightCone,PhysRevLett.128.150602}. 
This breakthrough has involved significant technical progress and intricate and lengthy proof techniques.

This work presents a compellingly simple proof of macroscopic particle propagation in interacting 
 bosonic particle Hamiltonians, including the famous Bose-Hubbard model, defined on possibly 
 exotic lattices: In fact, a finite \emph{genuine speed of sound} in particle numbers follows from literally
 a few elementary and physically motivated lines:  Particle-like local excitations will propagate
basically like $Vt$ in time $t\geq 0$ with a speed of sound $V>0$, up to spatially exponentially suppressed
corrections, for arbitrary translationally invariant initial stationary states. 
Again, the  proof is strikingly simple. Even with all details explained and the argument meeting rigorous standards, it remains brief. 
No logarithmic corrections to the sound cone arise, and there are no issues with the relevant bosonic operators being unbounded. 
This result is aimed at demystifying the concept of sound speed for bosons in particle propagation, 
and is expected to have a strong didactical value.

 \begin{figure}[tb]
\centering
\includegraphics[width=0.8\columnwidth]{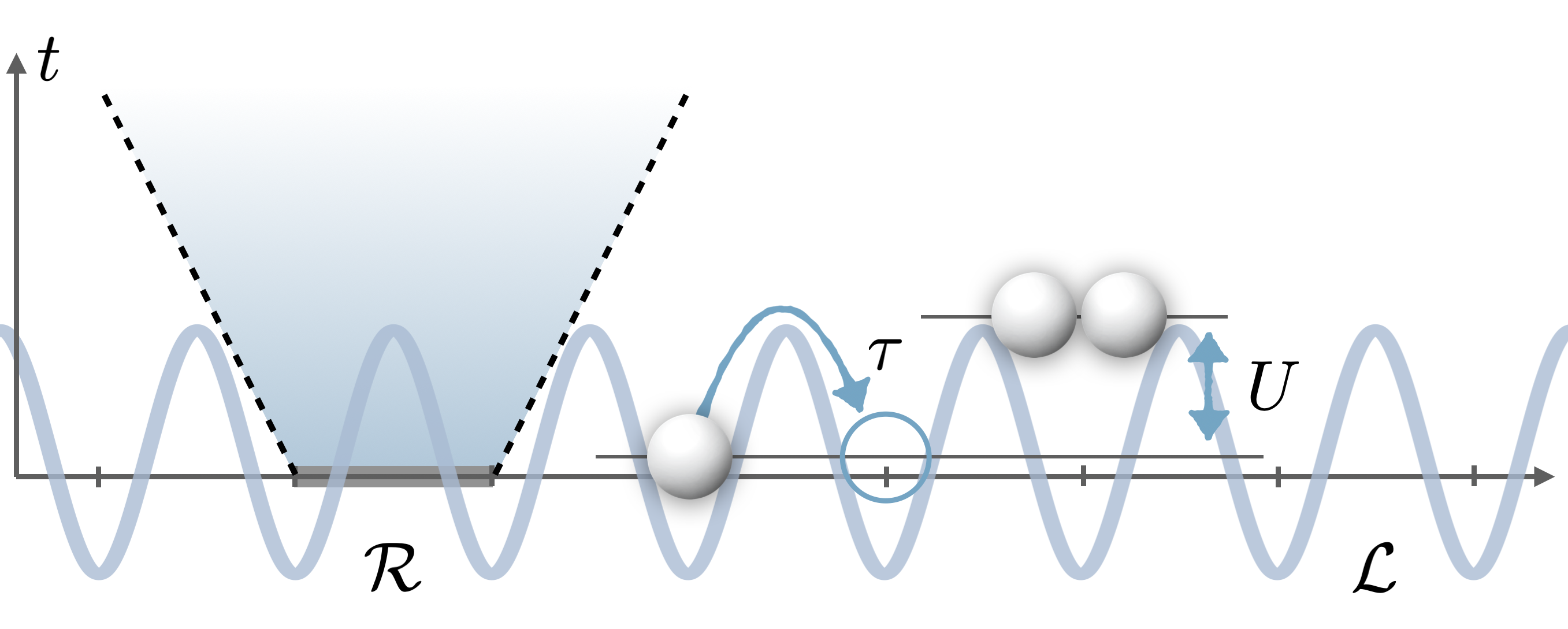}
\caption{A schematic picture of the Bose-Hubbard model in one spatial dimension, 
with a hopping at rate $\tau>0$ 
to neighbouring ones and on-site 
interactions with strength $U>0$, which is a paradigmatic example of the general 
interacting bosonic particle Hamiltonians considered
here. This work shows a finite speed of sound for particle densities for  general such
models.
Local excitations initially confined to a region ${\cal R}$ will at most ballistically
propagate through the lattice ${\cal L}$ with a speed of sound, 
up to exponentially small corrections.}
\label{Figure}
\end{figure}

\emph{Family of considered Hamiltonians.} Throughout this work, we consider
interacting bosonic particle Hamiltonians. This means that we have a graph $G=({\cal L},{\cal E})$ 
in mind with $|{\cal L}|=:n $ vertices associated with physical sites,
 on which quantum particles are being placed. Most prominently, this lattice would be a cubic one, 
 but we can allow for exotic lattices as well. Any such graph will feature the natural graph-theoretic distance ${\rm dist}(.,.)$. 
 For a one-dimensional quantum system, it is simply ${\rm dist}(j,k) =|j-k|$. $\langle j,k \rangle$ denotes 
 sites that are nearest neighbours for which ${\rm dist}(j,k)=1$. 
 Generally, the family of Hamiltonians we allow for are
of the form $\hat H :=  \hat T+ \hat V$ with 
\begin{equation}\label{Hamiltonian}
	\hat T := - \tau \sum_{\langle j,k \rangle}
	\hat b_j^\dagger \hat b_k ,\,
	\hat V:= \sum_{j=1}^n \hat A_j,
\end{equation}
with $\tau>0$, where $\{\hat A_j\}$ are operators supported on the site labelled $j$
that satisfy $[\hat n_j , \hat A_j]=0$ for all
$j=1,\dots, n$, so commute with the local number operator. This can even be an on-site high power of the number operator. 
This includes the familiar \emph{Bose-Hubbard Hamiltonian}  with
\begin{eqnarray}
	\hat V =  \frac{U}{2} \sum_{j=1}^n  \hat n_j (\hat n_j-1),
\end{eqnarray}
where  $\tau>0$ is the hopping strength reflecting tunnel processes from one site to a 
neighbouring one, $U>0$ denotes the repulsive interactions on a respective site, 
and $\mu>0$ is the chemical potential. Bosonic annihilation operators supported on site $j$
are denoted as $\hat b_j$, with $\hat n_j:= \hat b_j^\dagger \hat b_j$ being the local particle number operators.
For any such model and an operator $M$ defined on the 
 entire system, we
 can define its \emph{covariance matrix} $C(M)\in \cc^{n\times n}$ via its entries for $j,k=1,\dots n$ as
 \begin{equation}
 	C_{j,k}(M):= {\rm tr}( \hat b_j^\dagger \hat b_k M).
 \end{equation}
 
\emph{Causality bounds from perturbing stationary states.} We will consider the natural situation in which
the system is initially in a translationally invariant stationary state $\omega$. This stationary state would most naturally be 
the ground state or a thermal state 
$\omega=e^{-\beta \hat H}/{\rm tr} (e^{-\beta \hat H}) $
at some inverse temperature $\beta>0$,
but could also be any other stationary state with $[\omega, \hat H]=0$. At time $t=0$, this
state is locally excited by the addition of bosons. We show the ballistic propagation of the impact of this
excitation 
\begin{equation}
t\mapsto \rho(t) = e^{-it\hat H}
\Phi (\omega)
e^{it\hat H}
\end{equation}
as the state evolves in time $t\geq 0$ 
under the dynamics generated by $\hat H$.  In order to be as concrete and simple as possible, we consider the following  family of initial excitations.
It implies that Kraus operators transform ``covariantly'', and natural processes of boson addition satisfy this.

 \begin{definition}[Initial excitations]\label{IE} 
  We consider initial states of the form $\rho(0) =\Phi(\omega)$, where the channel $\Phi$ acts non-trivially
  on ${\cal R}$ only, and the Kraus operators of the quantum channel 
  $\omega\mapsto \sum_l \hat K_l \omega \hat K_l^\dagger$
  satisfy
  $[\hat n_j, \hat K_l] = \hat K_l \hat D_{l,j}$, where $[\hat D_{l,j},\hat n_k]=0$ and $\hat D_{l,j}\geq 0$ 
  for all $l$ and all $j,k=1,\dots, n$.
 \end{definition}
 
\emph{Approach of consistent particle numbers.}
Once again, the argument presented is simple, and in what follows, every important step of the argument
will be laid out in detail. At the same time,
in order to arrive at the approach followed, a number of prejudices 
need to be overcome, which will be commented upon in each relevant step.
Conceptually speaking---and also technically---the present argument builds closely upon that of 
Ref.\ \cite{1010.4576}, but here modified in a crucial manner. There, a
vector of particle numbers $\alpha\in \rr^n$ with
$\alpha_j(t) := {\rm tr}(\hat n_j (\rho(t)))$ is introduced and possible changes 
of this vector in time considered that are consistent with the  dynamics
generated by an interacting bosonic Hamiltonian. This will no longer directly work, since $\rho(t)-\omega$ is Hermitian, but not 
a positive semi-definite operator, leading to a breakdown of the proof techniques of Ref.\ \cite{1010.4576} resorting to the
Cauchy Schwarz inequality. Important is also the fact that the initial state is there locally uniquely
defined by a vanishing particle number, an insight that is implicitly used in the argument. This substantial obstacle is overcome here.

\emph{Relative particle numbers.} The findings are expressed in terms of the
\emph{vector of relative particle numbers} $\mathbf{x}\in \rr^n$,
\begin{equation}
x_j(t) := {\rm tr}(\hat n_j (\rho(t)-\omega)).
\end{equation}
This quantity has a crisp operational interpretation,
in that it is the literate deviation in particle number from that of the stationary state,
and the main diagonal of $C(\rho(t)) - C(\omega)$.
This quantity naturally  reflects particle-like excitations and reflects particle propagation.

\emph{Main result.} We are now in the position to state the main result. This statement is
deliberately formulated in general terms. It should be clear, however, that it precisely
states what is intuitively expected: If one locally perturbs a stationary state of
an interacting bosonic system in an arbitrary manner, for any lattice, the
impact of this perturbation will propagate through the lattice with a proper,
well-defined speed of sound, without logarithmic corrections, so that the impact outside the sound cone
is exponentially suppressed with the distance.

\begin{theorem}[Finite speed of sound of particle propagation for interacting bosons] 
For any interacting bosonic Hamiltonian (\ref{Hamiltonian})
defined on a lattice $G$ and any initial state $\rho(0)$ that locally perturbs
a stationary state $\omega$ as in Definition 1,  
adding particles locally on a region ${\cal R}\subset {\cal L}$, 
it holds true that
\begin{equation}
|x_j(t)| \leq c N_0 e^{vt-l}
\end{equation}
for all times $t\geq 0$, 
for all $j$ with ${\rm dist} (j,{\cal R})\geq l$,
where $c>0$ is a universal constant and $v: = v_0 + D$ is an upper bound to the speed of sound.
$D$ is the maximal vertex degree of the lattice $G$,
$v_0 := \chi \Delta \tau$, where $\chi>0$ is an absolute constant
and $\Delta:= \|M\|/2$, $M$ being the adjacency matrix of $G$.
The constant $N_0>0$ depends on the initial state and $\Phi$ only.
\end{theorem}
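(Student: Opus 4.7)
My plan is to adapt the comparison-type argument of Ref.~\cite{1010.4576}. First I differentiate $x_j(t)=\tr(\hat n_j\sigma(t))$ with $\sigma(t):=\rho(t)-\omega$. Since the hypothesis $[\hat A_m,\hat n_j]=0$ for all $m,j$ implies $[\hat V,\hat n_j]=0$, only the hopping term contributes and
\begin{equation*}
\dot x_j(t) = -2\tau\sum_{k\sim j}\Im C_{j,k}(\sigma(t)),
\end{equation*}
so that the task reduces to bounding the off-diagonal covariance entries of the non-PSD Hermitian operator $\sigma(t)$.

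For this, I would try the following Cauchy--Schwarz substitute. Spectrally decompose $\sigma(t)=\sigma_+(t)-\sigma_-(t)$ into PSD parts with orthogonal supports, both evolving unitarily under $U(t)=e^{-it\hat H}$. Apply the standard Cauchy--Schwarz bound to each piece: $|C_{j,k}(\sigma_\pm)|\leq\sqrt{X_j^\pm X_k^\pm}$ with $X_j^\pm:=\tr(\hat n_j\sigma_\pm)$. The elementary inequality $\sqrt{ab}+\sqrt{cd}\leq\sqrt{(a+c)(b+d)}$ then gives $|C_{j,k}(\sigma(t))|\leq\sqrt{X_j(t)X_k(t)}$ where $X_j(t):=\tr(\hat n_j|\sigma(t)|)$. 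Since $|\sigma(t)|$ is PSD and evolves unitarily, the Cramer-style differential inequality $\dot Y_j\leq\tau(MY)_j$ for $Y_j:=\sqrt{X_j}$ closes in the same way as in Ref.~\cite{1010.4576}, yielding $Y(t)\leq e^{\tau tM}Y(0)$ componentwise. Together with the Neumann-series estimate $(M^n)_{j,k}=0$ for $n<\dist(j,k)$, this should give a ballistic cone with speed $v_0=\chi\tau\|M\|/2$; the additive $D$ in the final speed should then come from summing over nearest neighbours in the kinetic step.

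The main obstacle is that $Y(0)$ is \emph{not} supported on ${\cal R}$: the quantity $\tr(\hat n_j|\sigma(0)|)$ inherits the spatially uniform background $\sim 2\alpha^\omega$ from $\omega$, so a naive Cramer bound pollutes the estimate with a term $\sim\alpha^\omega e^{\tau t\|M\|}$ at every site, destroying the exponential decay in $l=\dist(j,{\cal R})$. The ``crucial modification'' alluded to in the paper must isolate, within the Cramer comparison, only the localized excess over the stationary background. Here the Kraus covariance condition $[\hat n_j,\hat K_l]=\hat K_l\hat D_{l,j}$ with $\hat D_{l,j}\geq 0$ should be essential: it guarantees $\tr(\hat O\sigma(0))=0$ for every observable $\hat O$ supported off ${\cal R}$, so that the true ``excess particle'' content $Y_j(0)-\sqrt{2\alpha^\omega}$ is nonnegative, supported on ${\cal R}$, and has total mass bounded by $N_0$. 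The expected final step is to show that only this excess drives $\dot x_j$, so that only its ballistic propagation --- and not the uniform background --- contributes to the final bound, producing the claimed $|x_j(t)|\leq cN_0\,e^{vt-l}$.
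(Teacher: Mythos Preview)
Your setup and the first differential identity are correct, and you have correctly identified the central obstacle: working with $X_j(t)=\tr(\hat n_j|\sigma(t)|)$ gives a Cramer-type inequality that closes, but the initial data $X_j(0)$ carry the full stationary background and are \emph{not} supported on ${\cal R}$. Your proposed fix---subtracting a uniform $\sqrt{2\alpha^\omega}$ and hoping ``only the excess drives $\dot x_j$''---does not go through: the bound $|C_{j,k}(\sigma(t))|\leq\sqrt{X_j(t)X_k(t)}$ is what feeds the differential inequality, and there is no mechanism by which the background part of $X_j$ cancels in this estimate. Moreover, the Kraus covariance condition in Definition~1 is not what localises $x_j(0)$ (any channel acting only on ${\cal R}$ does that); its role is elsewhere.

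The paper's resolution is different and bypasses $|\sigma(t)|$ altogether. The key lemma is a positivity statement at the level of the $n\times n$ covariance matrix: for all $t\geq 0$,
\[
C(\rho(t))\;\geq\;C(\omega)\qquad\text{(as Hermitian matrices)},
\]
even though $\rho(t)-\omega\not\geq 0$ as an operator. This is proved by a continuous-induction argument on an operator inequality of the form $\sum_{j,k}M_{j,k}\,\hat K_l^\dagger e^{it\hat H}\hat b_j^\dagger\hat b_k e^{-it\hat H}\hat K_l\geq\sum_j\hat n_j$ for all $M\geq\mathbb{I}$, splitting hopping and interaction via the Lie product formula; Definition~1 is used precisely here, to control the commutator with $\hat V$ and the Kraus operators. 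Once $C(\sigma(t))\geq 0$ is known, Cauchy--Schwarz applied to this $n\times n$ positive matrix gives directly
\[
|C_{j,k}(\sigma(t))|\;\leq\;\sqrt{x_j(t)\,x_k(t)},
\]
i.e.\ the bound is already in terms of the relative particle numbers $x_j$, which \emph{are} supported on ${\cal R}$ at $t=0$. From there the argument of Ref.~\cite{1010.4576} runs verbatim. So the missing idea is not a subtraction trick inside your $X_j$-scheme, but the covariance-matrix positivity lemma that lets you avoid $|\sigma(t)|$ entirely.
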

It is important to note that $v>0$ is indeed an upper bound to a genuine speed of sound $V$, 
independent
of the system size. The constant is actually given by $C := 2\chi^2 /(\chi-1)\approx 9.95411$, where 
$\chi$ is the solution of $\chi \ln \chi = \chi+1$. 

Key to the argument is the following 
elementary---but not trivial---lemma. The subsequent statement may  
be surprising, as actually $\rho(t)-\omega\not\geq 0$, but
one can still conclude that $ C(\rho(t)) $ is larger than the covariance matrix $C(\omega)$ of the stationary state in matrix ordering.
While the argument is not particularly technical, it is not obvious either, and it is the particular way in which Eq.\ (\ref{premise}) is set up that renders the proof simple. So \emph{here} begins the short argument.
  
 \begin{lemma}[Positivity] For all times $t\geq 0$,
 \begin{equation}
 C(\rho(t)) \geq C(\omega).
 \end{equation}
 \end{lemma}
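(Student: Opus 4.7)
The plan is to reduce the Loewner inequality $C(\rho(t))\ge C(\omega)$ to a scalar positivity statement, move the time evolution onto the covariance observable using the stationarity of $\omega$, and extract positivity from the two structural ingredients $\hat D_{l,j}\ge 0$ and $[\hat H,\hat N]=0$.

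Testing against an arbitrary $\alpha\in\mathbb{C}^n$ and setting $\hat b(\alpha):=\sum_j\alpha_j\hat b_j$, the claim becomes $\tr\bigl(\hat b(\alpha)^\dagger\hat b(\alpha)(\rho(t)-\omega)\bigr)\ge 0$. Using $[\omega,\hat H]=0$ to pull $e^{\mp it\hat H}$ through the trace, this equals $\tr\bigl(\hat B(t)^\dagger\hat B(t)(\Phi(\omega)-\omega)\bigr)$ with the Heisenberg-evolved operator $\hat B(t):=e^{it\hat H}\hat b(\alpha)e^{-it\hat H}$, which is manifestly positive semi-definite and, because $[\hat H,\hat N]=0$, particle-number preserving---the only two properties of this observable I will invoke. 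Inserting the Kraus decomposition of $\Phi$ and using $\sum_l\hat K_l^\dagger\hat K_l=\mathbb{I}$, one rewrites the quantity as $\tr\bigl((\Phi^*(\hat O)-\hat O)\omega\bigr)$ with $\hat O:=\hat B(t)^\dagger\hat B(t)$ and $\Phi^*(\hat O):=\sum_l\hat K_l^\dagger\hat O\hat K_l$. The hypothesis $[\hat n_j,\hat K_l]=\hat K_l\hat D_{l,j}$ together with $[\hat D_{l,j},\hat n_k]=0$ and $\hat D_{l,j}\ge 0$ implies, by simultaneous diagonalisation in the occupation-number basis, that each Kraus operator has the structural form $\hat K_l|n\rangle=c_l(n)|n+d_l(n)\rangle$ with $d_l(n)\in\mathbb{N}_0^n$ and $\sum_l|c_l(n)|^2=1$, so that $\Phi$ only raises site-wise occupations.

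As a warm-up, the same manipulation applied to $\hat O=\hat n_j$ produces the clean operator identity $\Phi^*(\hat n_j)=\hat n_j+\sum_l\hat D_{l,j}\hat K_l^\dagger\hat K_l\ge\hat n_j$, because $\hat D_{l,j}$ and $\hat K_l^\dagger\hat K_l$ commute (both being diagonal in the number basis) and are individually positive semi-definite; this already delivers the diagonal of the target matrix inequality. The main obstacle is to promote this to the full $\hat O=\hat B(t)^\dagger\hat B(t)$ in the presence of off-diagonal matrix elements of $\omega$ in the number basis, where the Kraus operators act coherently across Fock sectors and could in principle generate cross terms of uncertain sign. My strategy is to exploit the $\hat N$-preservation of $\hat O$ to confine the relevant coherences to single $\hat N$-sectors, where the fixed shifts $d_l(n)\ge 0$, the Kraus normalisation $\sum_l|c_l(n)|^2=1$, and $\hat D_{l,j}\ge 0$ can be recombined into a manifestly non-negative remainder for $\Phi^*(\hat O)-\hat O$. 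This is presumably the ``particular way in which the premise is set up'' alluded to in the text: a clean positive reformulation of the commutator sum $\sum_l\hat K_l^\dagger[\hat O,\hat K_l]$ that sidesteps any explicit Fock-basis expansion and keeps the whole argument inside the few elementary lines advertised in the introduction.
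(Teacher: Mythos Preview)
Your reduction via the Heisenberg picture and the dual channel is clean and correct up to the point where you need $\tr\bigl((\Phi^*(\hat O)-\hat O)\,\omega\bigr)\ge 0$ for $\hat O=\hat B(t)^\dagger\hat B(t)$. The gap is the last step: the operator inequality $\Phi^*(\hat O)\ge\hat O$ that you are aiming for as a ``manifestly non-negative remainder'' is false, already at $t=0$. On two sites take the Kraus pair $\hat K_1=\ket{2}\bra{1}_1\otimes\mathbb{I}_2$ and $\hat K_2=(\mathbb{I}-\ket{1}\bra{1})_1\otimes\mathbb{I}_2$, which satisfies Definition~\ref{IE} with $\hat D_{1,1}=\mathbb{I}$ and all other $\hat D_{l,j}=0$, and let $\hat O=(\hat b_1+\hat b_2)^\dagger(\hat b_1+\hat b_2)$. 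On the $N=1$ sector, in the basis $\{\ket{10},\ket{01}\}$, a direct computation gives $\Phi^*(\hat O)-\hat O$ with entries $1,-1,-1,0$, hence determinant $-1$ and a negative eigenvalue. Your proposed $\hat N$-sector confinement does not help, since $\ket{10}$ and $\ket{01}$ lie in the \emph{same} total-number sector and the troublesome coherence is precisely intra-sector. Because you have already spent the stationarity of $\omega$ in passing to the Heisenberg picture and invoke no further structure of $\omega$, you are in effect committed to an operator inequality that does not hold.

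The paper takes a genuinely different route that supplies the missing idea. It does \emph{not} compare $\Phi^*(\hat O)$ with $\hat O$; instead it compares with the reference operator $\hat N=\sum_j\hat n_j$, establishing the operator premise $\sum_l\hat K_l^\dagger\,e^{it\hat H}\bigl(\sum_{j,k}M_{j,k}\hat b_j^\dagger\hat b_k\bigr)e^{-it\hat H}\,\hat K_l\ge\hat N$ for all test matrices $M\ge\mathbb{I}$ simultaneously, by a continuous induction in $t$. The decisive observation is that under a Lie--Trotter splitting the hopping step acts on $M$ by orthogonal conjugation $M\mapsto OMO^T$, which manifestly preserves $M\ge\mathbb{I}$ (this is why $\hat N$, corresponding to $M=\mathbb{I}$, is the natural anchor), while the interacting step is controlled infinitesimally through the commutation relations of Definition~\ref{IE}. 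Your Heisenberg-picture shortcut, although tempting, discards exactly the time-propagation structure on which this $M\ge\mathbb{I}$ mechanism relies.
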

 
 \begin{proof} We have $C(\rho(0))\geq C(\omega)$ by assumption of the initial condition, reflecting the 
 addition of bosonic particles within ${\cal R}$. The interesting insight is that this matrix ordering is
 preserved over time. In order to do so, we will make use of the particle number as a reference operator, and
 we will now prove that
\begin{equation}\label{premise}
\sum_{j,k=1}^n \sum_l M_{j,k}  \hat K^\dagger_l  e^{it\hat H }\hat b_j^\dagger \hat b_k e^{-it\hat H }  \hat K_l    - 
\sum_{j=1}^n     \hat b_j^\dagger \hat b_j  \geq 0
 \end{equation}
for all $t\in [0,\infty)$ and 
for all $\mathbb{R}^{n\times n} \ni M\geq \mathbb{I}$ (in a ``continuous induction argument''). The set $S$ of times within $[0,\infty)$
for which Eq.\ (\ref{premise}) holds is not empty, as for $t=0$, this follows from 
the commutation relations in Definition 1. The set $S\backslash \{0\}$ is also open, in that $t\in S\Rightarrow \exists \varepsilon>0 : (t-\varepsilon, t+ \varepsilon)\cap [0,\infty)\subset S$.
What is more, the  set $S$ is  closed, in that if $t_n\in S, t_n\rightarrow t\Rightarrow t \in S$.
This allows us  to argue that Eq.\ (\ref{premise}) holds 
 true for all $t\in [0,\infty)$ for all $M\geq \mathbb{I}$.
 To do so, we treat the interacting and the hopping part separately, 
 invoking the Lie product formula, $e^{-i\varepsilon \hat H} \psi =
 e^{-i\varepsilon (\hat T+ \hat V)} \psi= 
  \lim_{m\rightarrow \infty} (e^{-i\varepsilon \hat T/m}+  e^{-i\varepsilon \hat V/m})^m \psi$
  for every Hilbert space vector $\psi$, with convergence in Hilbert space norm for each $\psi$, i.e., 
  strong operator convergence. The functional analysis of the Bose-Hubbard type models is well understood.
  The hopping part actually immediately 
  preserves the validity of
  Eq.\ (\ref{premise}), as time evolution by a time $\delta \coloneqq \varepsilon/m$
  is reflected by a map $M\mapsto O M O^T $ with $O\in O(n)$, so that  $O M O^T\geq \mathbb{I}$ is manifestly true
  if $M\geq \mathbb{I}$.
  The interacting part is slightly more involved. Making use of 
Definition 1, Eq.\ (\ref{premise}) becomes for time
$t+\delta$ 
     \begin{eqnarray}
     \label{complicated}
&&\sum_{j,k=1}^n M_{j,k} \sum_l \hat K^\dagger_l e^{i\delta \hat V }  e^{it\hat H }\hat b_j^\dagger \hat b_k e^{-it\hat H } e^{-i\delta \hat V }  \hat K_l  \\
&=&\sum_{j,k=1}^n M_{j,k} \sum_l \hat K^\dagger_l  e^{it\hat H }\hat b_j^\dagger \hat b_k e^{-it\hat H }  \hat K_l \nonumber \\
&+& i \delta \sum_{j,k=1}^n M_{j,k} \sum_l [\hat D  + \hat V, \hat K^\dagger_l  e^{it\hat H }\hat b_j^\dagger \hat b_k e^{-it\hat H }  \hat K_l] 
+O(\delta^2) .
\nonumber
 \end{eqnarray}
 Invoking now Eq.\ (\ref{premise}) and making use of the second property of the initial condition in Definition 1, 
one can argue that (\ref{complicated}) is larger than $\sum_{j=1}^n \hat n_j$ in matrix ordering
for an interval $(t-\delta,t+\delta)$. Taken together
with the above Lie product formula, it shows that Eq.\ (\ref{premise}) is also true for $ (t-\varepsilon, t+ \varepsilon)\cap [0,\infty)\subset S$
for all $M\geq \mathbb{I}$. Closedness is immediate. Applying Eq.\ (\ref{premise}) now to the reference state $\omega$ and taking the
trace gives rise to 
\begin{equation}
\sum_{j,k=1}^n M_{j,k} C_{j,k}(\rho(t)) \geq \tr( C(\omega))
\end{equation}
for all $t\in[0,\infty)$ and all $M\geq \mathbb{I}$, from which the 
 statement to be shown follows.
 \end{proof}

 Equipped with this insight, the argument of Theorem 1 becomes immediate and exactly follows that of Ref.\ \cite{1010.4576}, 
 which is still given for completeness.
 \begin{proof} For the vector of relative particle numbers, we find 
 \begin{equation}
 \left| \frac{d}{dt}  x_j(t) \right|  \leq  2\tau  \sum_{\langle j,k\rangle}  \tr \left( \hat b_j^\dagger \hat b_k (\rho(t) -\omega) \right)
 \leq 2\tau  \sum_{\langle j,k\rangle}  \left( x_j(t) x_k(t)\right)^{1/2}
 \end{equation}
 for $t\geq 0$. While $\rho(t) -\omega\not\geq 0$, this conclusion can still be drawn since $C(\rho(t) -\omega)\geq 0$. Linearizing this
gives
  \begin{equation}
\Bigl| \frac{d}{dt}  x_j(t) \Bigr|  \leq  \tau   
\Bigl(
 {\cal D} x_j(t) + \sum_{\langle j,k\rangle} x_k(t)
 \Bigr). 
 \end{equation}
An upper bound $\gamma_j(t)\geq x_j(t)$ is given by the solution of the differential equation
  \begin{equation}
 \Bigl| \frac{d}{dt}  \gamma_j(t) \Bigr|  = \tau   
 \Bigl(
 {\cal D} \gamma_j(t) + \sum_{\langle j,k\rangle} \gamma_k(t)
 \Bigr),
 \end{equation}
 which, upon exponentiation, gives the statement of the theorem.
 \end{proof}

 \emph{Summary.} This work has shown that one can derive a compellingly simple bound for the speed of sound of bosons for particle propagation, 
 an important
 foundational question that has long been unresolved. While the argument presented is fully rigorous, it is based on a very physical insight 
 that carries a lot of substance and information: After all, even in the case of large filling, one can pursue an argument of perturbed 
 covariance matrices. While the argument as such is not technically involved, and fits onto a lot less than a page, it is based on a 
 number of creative insights and is far from trivial. In particular, the topological argument that allows to conclude that the perturbed
 covariance matrix is larger than the unperturbed on in matrix ordering is not obvious. It is the hope that the present work provides
 physical substance to this old problem in quantum many-body physics and invites more and new sophisticated bounds on information
 propagation, and may assist making technical progress for technically demanding settings such as long-ranged Bose-Hubbard models
\cite{LemmLongRanged}. Given the clear physical intuition provided here, such progress seems plausible. 
  
 \emph{Acknowledgements.} While the core of this, then still incomplete, argument was announced in May 2021 and sent to some colleagues, 
 only now I have found the time to complete this work. In the meantime, as laid out in the introduction,
  I became aware of the beautiful body of recent works
\cite{Saito,PhysRevX.12.021039,BosonicLightCone,PhysRevLett.128.150602,LemmLongRanged},
that addresses a very similar question with different methods. Given the striking simplicity of the present bound
and the physical insight it gives rise to for particle propagation, 
I still decided to make this work publicly available, as it has added value,
as this work may well complement the bound presented there, it has didactical value,
and since it provides in some ways 
a stronger bound giving rise to a genuine sound cone without logarithmic corrections and holds
true for large classes of lattices. I would also like to thank 
N.\ Schuch, S.\ K.\ Harrison and 
T.\ J.\ Osborne for discussions at the time of the publication of Ref.\
\cite{1010.4576} and for M.\ Lemm for kind and highly valuable feedback on the manuscript. 
This work has been supported by the DFG (CRC 183) and the FQXi. It has also received funding from 
the Quantum Flagship (PASQuanS2), the BMFTR (MuniQCAtoms), and the European Research Council (DebuQC). 


%

\end{document}